\title{ Site-Directed
Insertion: Decision Problems, Maximality and 
Minimality\thanks{Cho and Han were supported by the Basic 
	Science Research Program
through NRF (2015R1D1A1A01060097) and
the International Research \& Development Program of NRF (2017K1A3A1A12024971).
Salomaa and Smith were supported by Natural Sciences and
Engineering Research Council of Canada Grant OGP0147224.}}
\titlerunning{Site-Directed Insertion
}
\author{Da-Jung Cho\inst{1} \and Yo-Sub
Han\inst{1} \and  
Kai Salomaa\inst{2} \and Taylor J. Smith\inst{2}}
\institute{Department of Computer Science, Yonsei University\\ 
50, Yonsei-Ro, Seodaemun-Gu, Seoul 120-749, Republic of Korea\\
{\tt \{ dajungcho, emmous \}@yonsei.ac.kr } \\
\and
School of Computing, Queen's University \\
Kingston, Ontario K7L 2N8, Canada\\
{\tt \{ ksalomaa, tsmith \}@cs.queensu.ca}\\
}
	\authorrunning{D.-J. Cho, Y.-S. Han, K. Salomaa, T.J. Smith}
\definecolor{aocolour}{rgb}{0.7,0.8,1}
\definecolor{mkcolour}{rgb}{1,0.9,0.7}
\newcommand{\seepage}[1]{\marginpar{\scriptsize (p.~\pageref{#1})}}
\begin{document}

\renewcommand{\labelenumi}{(\roman{enumi})}

\newtheorem{open}{Open problem}

\newcommand{\inv}{\mathbb{INV}}
\newcommand{\PI}{\textit{pseudo-inversion}}
\newcommand{\mpi}{\mathbb{PI}}
\newcommand{\sdi}{\mathbb{SDI}}

\newcommand{\rfig}[1]{Fig.~\ref{#1}} 
\newcommand{\rlem}[1]{Lemma~\ref{#1}}
\newcommand{\rdef}[1]{Definition~\ref{#1}}
\newcommand{\rthm}[1]{Theorem~\ref{#1}}
\newcommand{\rpro}[1]{Proposition~\ref{#1}}
\newcommand{\rcor}[1]{Corollary~\ref{#1}} 
\newcommand{\rse}[1]{Section~\ref{#1}}
\newcommand{\robs}[1]{Observation~\ref{#1}}
\newcommand{\req}[1]{Equation~(\ref{#1})}
\newcommand{\ralg}[1]{Algorithm~\ref{#1}}
\newcommand{\rtab}[1]{Table~\ref{#1}}
\newcommand{\rsec}[1]{Section~\ref{#1}}
\newcommand{\rapp}[1]{Appendix~\ref{#1}}


\maketitle

\begin{abstract}
	Site-directed insertion is an overlapping insertion
	operation that can be viewed as analogous to the
	overlap assembly or chop operations that concatenate
	strings by overlapping a   suffix and a prefix of the
	argument strings.
We consider decision problems and language equations involving
site-directed insertion. By relying on the tools provided
by {\em semantic shuffle on trajectories\/} we show that one variable
equations involving site-directed insertion and regular constants
can be solved. We consider also maximal and minimal variants
of the site-directed insertion operation.
\end{abstract}

\section{Introduction}

Site-directed mutagenesis is one of the most important
techniques for generating mutations on specific sites
of DNA using polymerase chain reaction (PCR) based
methods~\cite{Reikofski1992}. The algorithmic applications
of mutagenesis have been considered e.g. by
Franco and Manca~\cite{Franco2011}.
Contextual insertion/deletion systems in the study of molecular
computing have been used, e.g. by
Kari and Thierrin~\cite{Kari1996}, Daley et al.~\cite{Daley1999}
and Enaganti et al.~\cite{Enaganti2015}.

Site-directed insertion (SDI) of a string $y$ into a string $x$ involves
matching an outfix of $y$ with a substring of $x$ and inserting the
``middle part'' of $y$ not belonging to the outfix into $x$.
Site-directed insertion has earlier been considered under the name
{\em outfix-guided insertion\/}~\cite{Cho2017}.
The operation is an overlapping
variant of the insertion operation in the same sense as
the overlap assembly, a.k.a. chop operation, is a variant
of string 
concatenation~\cite{Csuhaj2007,Enaganti2017,Holzer2012,Holzer2017}.

The maximal (respectively, minimal) SDI of a string $y$ into a
string $x$ requires that, at the chosen location of $x$, the operation
matches a maximal (respectively, minimal) outfix of $y$ with
a substring of $x$. This is analogous to the
maximal and minimal chop operations
 studied by Holzer et al.~\cite{Holzer2017}.

Site-directed insertion can be represented as a 
{\em semantic shuffle on trajectories\/} (SST).
Shuffle on trajectories was introduced by Mateescu et al.~\cite{Mateescu1998}
and the extension to SST is due to Domaratzki~\cite{Domaratzki2004}.
Further extensions of the shuffle-on-trajectories operation have
been studied by Domaratzki et al.~\cite{Domaratzki2006}.

	Here we study decision problems and language equations involving
site-directed insertion and its maximal and minimal variants.
The representation of SDI as a semantic shuffle on a regular
set of trajectories
gives regularity preserving left- and right-inverses of the operation.
By the general results 
of Kari~\cite{Kari1994} on the decidability of equations, translated for
SST by Domaratzki~\cite{Domaratzki2004}, this makes it possible
to decide linear equations involving SDI where the constants are
regular languages.

The maximal and minimal SDI operations do not, in general, preserve
regularity. This means that the operations cannot be
represented by SST~\cite{Domaratzki2004} (on a regular set
of trajectories) and the above tools
are not available to deal with language equations.
We   show that for maximal and minimal SDI
certain independence properties related to coding
applications~\cite{Jurgensen1997} can be decided in a polynomial
time. The decidability
of whether a regular language is closed under max/min SDI remains open.

In the last section we give a tight bound for the nondeterministic
state complexity of alphabetic SDI,
where the matching outfix must consist of a prefix and suffix
of length exactly one. An upper bound for the state complexity of
the general site-directed insertion is known but it remains open
whether the bound is optimal.

\section{Preliminaries}
\label{kaksi}

We assume the reader to be familiar with the
basics of  finite
automata, regular languages and context-free
languages~\cite{Shallit2009}. Here we briefly
recall some notation.

Let $\Sigma$ be an alphabet and $w \in \Sigma^*$. If we can
write $w = x y z$ we say that the pair $(x, z)$ is an
{\em outfix\/} of $w$. The outfix $(x, z)$ is a 
{\em nontrivial outfix\/} of $w$
if $x \neq \varepsilon$ and $z \neq \varepsilon$.
For $L \subseteq \Sigma^*$, $\overline{L} = \Sigma^*
- L$ is the complement of $L$.

A {\em nondeterministic finite automaton\/} (NFA) is a tuple
$A = (\Sigma, Q, \delta, q_0, F)$ where $\Sigma$ is
the input alphabet, $Q$ is the finite set of states,
$\delta \colon Q \times \Sigma \rightarrow 2^Q$ is
the transition function, $q_0 \in Q$ is the
initial state and $F \subseteq Q$ is the set of final states.
In the usual way $\delta$ is extended as a function
$Q \times \Sigma^* \rightarrow 2^Q$ and the {\em language accepted
by\/} $A$ is $L(A) = \{ w \in \Sigma^* \mid \delta(q_0, w) \cap F
\neq \emptyset \}$. The automaton $A$ is a {\em deterministic finite
automaton\/} (DFA) if $|\delta(q, a)| \leq 1$ for all $q \in Q$
and $a \in \Sigma$.
It is well known that the deterministic and nondeterministic
finite automata recognize the class of {\em regular languages.}

The so called fooling set lemma gives a technique for establishing
lower bounds for the size of NFAs:
\begin{lemma}[Birget 1992 \cite{Birget1992}]
	\label{latta23}
Let~$L\subseteq \Sigma^*$ be a regular language. 
Suppose that there exists a set $P=\{(x_i,w_i)\mid 1\leq i\leq n\}$ 
of pairs of strings such that: (i)
$x_iw_i\in L$ for $1\leq i\leq n$, and, (ii)
 if $i\neq j$, then $x_iw_j\not \in L$ or $x_jw_i\not \in L$ 
 for $1\leq i,j\leq n$.
Then, any minimal NFA for $L$ has at least $n$ states.
\end{lemma}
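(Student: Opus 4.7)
The plan is a contradiction argument using the pigeonhole principle and a splicing of accepting computations. Suppose, for the sake of contradiction, that there exists an NFA $A = (\Sigma, Q, \delta, q_0, F)$ with $L(A) = L$ and $|Q| < n$. For each index $1 \leq i \leq n$, condition (i) guarantees $x_i w_i \in L$, so we may fix an accepting computation of $A$ on $x_i w_i$; let $q_i \in \delta(q_0, x_i)$ denote the state reached along this computation after consuming the prefix $x_i$. This definition records one specific state per index $i$ that serves as a ``midpoint'' witnessing acceptance of $x_i w_i$.

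Since there are $n$ such midpoints $q_1, \ldots, q_n$ drawn from a set of size strictly smaller than $n$, the pigeonhole principle provides indices $i \neq j$ with $q_i = q_j$. From here I would splice: since $q_i \in \delta(q_0, x_i)$ and $q_j \in \delta(q_0, x_j)$, and some final state of $F$ lies in $\delta(q_i, w_i)$ as well as in $\delta(q_j, w_j)$, the equality $q_i = q_j$ allows us to concatenate the first half of one accepting run with the second half of the other. This yields accepting computations of $A$ on both $x_i w_j$ and $x_j w_i$, hence $\{x_i w_j, x_j w_i\} \subseteq L(A) = L$. This directly contradicts condition (ii), which demands that at least one of these two strings lies outside $L$.

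The conclusion then follows: no NFA with fewer than $n$ states can recognize $L$, so every minimal NFA for $L$ has at least $n$ states. I do not expect any real obstacle here; the only subtlety worth being careful about is the asymmetric form of condition (ii) (only one of $x_iw_j, x_jw_i$ needs to fall outside $L$), and the splicing argument is naturally set up to produce \emph{both} strings in $L$, which is exactly what is needed to trigger the contradiction regardless of which direction of (ii) fails.
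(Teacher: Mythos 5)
Your argument is the standard and correct proof of the fooling set lemma: fix an accepting run for each $x_iw_i$, record the state reached after the prefix $x_i$, apply the pigeonhole principle to force $q_i = q_j$ for some $i \neq j$, and splice the runs to show that \emph{both} $x_iw_j$ and $x_jw_i$ are accepted, contradicting condition (ii) no matter which of the two strings it excludes. The paper itself states this lemma as a cited result of Birget and gives no proof, so there is nothing to compare against; your write-up is sound, and you correctly identified the one subtlety (that the splicing must yield both cross strings in $L$ to defeat the disjunction in (ii)).
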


Finally we recall some notions concerning  operations on languages
and language equations.
Let $\odot$ be a binary operation on languages, and $L$,
$R$ are languages over an alphabet $\Sigma$. 
\begin{enumerate}
\item The language
 $L$ is {\em closed under $\odot$\/} if
$L \odot L \subseteq L$.
\item  The language $L$ is {\em $\odot$-free\/} with respect to $R$ if
$L \odot R = \emptyset$.
\item The language $L$ is {\em $\odot$-independent\/} with
	respect to $R$ if $(L \odot \Sigma^+) \cap R = \emptyset$.
\item A {\em solution\/} for an equation $X \odot L = R$ (respectively, 
$L \odot X = R$)  is a language $S \subseteq \Sigma^*$ such
that $S \odot L = R$ (respectively, $L \odot S = R$).
\end{enumerate}
The $\odot$-freeness and independence properties
 can be related to coding applications,
where it might be desirable that we cannot produce new strings
by applying an operation, such as site-directed insertion,
to strings of the language. Domaratzki~\cite{Domaratzki2004b} defines
trajectory-based codes analogously with (iii).
As we will see,  languages that
are site-directed insertion independent with respect to themselves
have a
definition closely resembling outfix-codes of index one \cite{Jurgensen1997}.

\section{Site-Directed Insertion}
\label{kolme}

The site-directed insertion is a partially overlapping insertion
operation analogously as the overlap-assembly (or self-assembly)
\cite{Csuhaj2007,Enaganti2017} models an overlapping concatenation
of strings. The overlapping concatenation operation is also  called
the  chop operation \cite{Holzer2017}.

The {\em site-directed insertion\/} (SDI) of a string $y$ into a string $x$
is defined as
$$
x \stackrel{\rm sdi}{\leftarrow} y = \{ x_1 u z v x_2 \mid x = x_1 u v x_2,
\; y = uzv, \; u \neq \epsilon, v \neq \epsilon \}.
$$
The above definition requires that the pair $(u, v)$ is a nontrivial outfix
of the string $y$ and $uv$ is a substring of $x$.
If $y = uzv$ is inserted into $x$ by matching the outfix with
a substring $uv$ of $x$, 
we say that $(u,v)$ is an \emph{insertion guide} for the operation.
Note that a previous paper
	\cite{Cho2017} uses the name ``outfix-guided insertion''
for the same operation. 

The site-directed insertion operation
is extended in the usual way for languages by setting
$$
L_1 \stackrel{\rm sdi}{\leftarrow} L_2 = \bigcup_{w_i \in L_i, i = 1, 2}
w_1 \stackrel{\rm sdi}{\leftarrow} w_2.
$$
We recall that regular languages are closed under
site-directed insertion.

\begin{proposition}[\cite{Cho2017}]
\label{patta31}
If $A$ and $B$ are NFAs with $m$ and $n$ states, respectively,
the language $L(A) \stackrel{\rm sdi}{\leftarrow} L(B)$ has
an NFA with $3mn + 2m$ states.
\end{proposition}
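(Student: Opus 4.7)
The plan is to construct an NFA $C$ that, on input $w$, nondeterministically guesses a factorization $w = x_1 u z v x_2$ witnessing membership in $L(A) \stackrel{\rm sdi}{\leftarrow} L(B)$, and verifies it by simulating $A$ on $x_1 u v x_2$ together with $B$ on $u z v$ in a single left-to-right sweep over $w$. The key observation driving the construction is that on the factors $u$ and $v$ the two simulated automata must advance in lock-step (they read the same symbols from $x$ and $y$), on $x_1$ and $x_2$ only $A$ advances (these symbols belong to $x$ but are outside the insertion guide), and on $z$ only $B$ advances (these symbols are the inserted middle of $y$).

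Accordingly I would organize the state set of $C$ into five phases corresponding to the five factors. Phase~1, reading $x_1$, uses $Q_A$ alone ($m$ states); phase~2, reading $u$, uses $Q_A \times Q_B$ with both components advancing ($mn$ states); phase~3, reading $z$, uses $Q_A \times Q_B$ with only the $B$-component advancing ($mn$ states); phase~4, reading $v$, uses $Q_A \times Q_B$ with both components advancing ($mn$ states); and phase~5, reading $x_2$, uses a fresh copy of $Q_A$ ($m$ states). The total is $m + mn + mn + mn + m = 3mn + 2m$, matching the claimed bound. The phase-change transitions act as guessing points: the transition from phase~1 to phase~2 seeds the $B$-component at the start state of $B$ and consumes the first symbol of $u$, which enforces $u\neq\varepsilon$; the transition into phase~4 analogously consumes the first symbol of $v$ and enforces $v\neq\varepsilon$. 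Empty factors $x_1$, $z$, and $x_2$ are accommodated by allowing the corresponding phases to be skipped via $\varepsilon$-transitions, and by designating the phase-4 pairs in $F_A \times F_B$ as final to cover $x_2 = \varepsilon$.

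Correctness would then follow by establishing a correspondence between accepting runs of $C$ on $w$ and witnessing factorizations: from an accepting run one reads off a factorization $w = x_1 u z v x_2$ by inspecting the phase boundaries, and conversely a witnessing factorization together with accepting runs of $A$ on $x_1uvx_2$ and of $B$ on $uzv$ assembles into an accepting run of $C$ with the obvious phase structure. The main obstacle I foresee is the bookkeeping at the four phase boundaries: the construction must simultaneously permit all empty-factor combinations ($x_1, z, x_2$ possibly empty) while keeping $u$ and $v$ nonempty, and it must do so without duplicating states that would push the total above $3mn + 2m$. Careful use of $\varepsilon$-transitions only between phases where no symbol of $u$ or $v$ could still be pending, together with the disjointness of the two copies of $Q_A$ used in phases~1 and~5, keeps the count tight.
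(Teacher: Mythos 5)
Your construction is correct and is essentially the one behind this result: the paper gives no proof of its own, citing \cite{Cho2017}, where the NFA is built in just this way from one copy of $Q_A$ for the prefix $x_1$, three copies of $Q_A \times Q_B$ for the lock-step/insert/lock-step phases on $u$, $z$, $v$, and a second copy of $Q_A$ for the suffix $x_2$, giving $3mn+2m$ states. The only detail worth making explicit is that every transition leaving phase~4 (whether into phase~5 or by accepting there with $x_2=\varepsilon$) must require the $B$-component to lie in $F_B$, so that $B$ indeed accepts $uzv$.
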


A simpler form of the overlap-assembly
operation requires the overlapping part
of the strings to consist of a single letter. This operation is called
``chop'' by Holzer and
Jacobi~\cite{Holzer2012} but the later definition of the chop-operation
\cite{Holzer2017} coincides with general overlap-assembly \cite{Enaganti2017}.
Analogously we define {\em alphabetic site-directed insertion\/} by
requiring that the overlapping prefix and suffix of the inserted string
each consist of a single letter.

The \emph{alphabetic site-directed insertion}
 of a string $y$ into a string $x$
is
$$
x \stackrel{\rm a-sdi}{\leftarrow} y = \{ x_1 a z b x_2 \mid x = x_1 ab x_2,
\; y = azb, \; a, b \in \Sigma, \; x_1, x_2, z \in \Sigma^* \}.
$$

Note that  the alphabetic site-directed insertion 
will have different closure properties
than the standard site-directed insertion. 
For example,
it is not difficult to see that the context-free languages are closed
under alphabetic site-directed insertion, while the context-free
languages are not closed under general site-directed insertion~\cite{Cho2017}.


\subsection{Decision problems}

For a regular language $L$, it is
decidable whether $L$ is closed under site-directed insertion.
The algorithm relies on the construction of Proposition~\ref{patta31}
and operates in polynomial time when $L$ is specified by a 
DFA~\cite{Cho2017}.
Deciding whether a context-free language is closed under site-directed
insertion is undecidable~\cite{Cho2017}.

A language  $L$ is $\stackrel{\rm sdi}{\leftarrow}$-free,
or SDI-free,
with respect to $R$
 if
no string of $R$ can be site-directed inserted into
a string of $L$, that is, if $L \stackrel{\rm sdi}{\leftarrow} R 
= \emptyset$.  The language $L$  is 
SDI-independent  with respect to $R$ if site-directed inserting
a non-empty string into $L$ cannot produce a string of $R$.
Note that $L$ being SDI-independent with respect to itself resembles
the notion of $L$ being an outfix-code of index one~\cite{Jurgensen1997}
with the difference that we require the outfix to be nontrivial.
For example, $\{ ab, b \}$ is SDI-independent but it is not
an outfix-code of index one.

\begin{theorem}
	\label{tatta20}
For NFAs $A$ and $B$ we can decide in polynomial time
whether 
\begin{enumerate}
	\item $L(A)$ is SDI-free
(or SDI-independent) with respect to $L(B)$.
	\item $L(A)$ is alphabetic SDI-free
(or alphabetic SDI-independent) with respect to $L(B)$.
\end{enumerate}
\end{theorem}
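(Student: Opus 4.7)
The plan is to reduce each of the four decision problems to emptiness testing on a polynomial-size NFA. Throughout, let $m = |Q_A|$ and $n = |Q_B|$.

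For the SDI-freeness half of (i), I would observe that $L(A)$ is SDI-free with respect to $L(B)$ iff $L(A) \stackrel{\rm sdi}{\leftarrow} L(B) = \emptyset$, so Proposition~\ref{patta31} already supplies an NFA of size $3mn + 2m$ constructible in polynomial time, and NFA emptiness is tested in polynomial time. For SDI-independence, I need to decide whether $(L(A) \stackrel{\rm sdi}{\leftarrow} \Sigma^+) \cap L(B) = \emptyset$. I would take a two-state NFA for $\Sigma^+$, apply Proposition~\ref{patta31} to obtain an NFA of size $8m$ for $L(A) \stackrel{\rm sdi}{\leftarrow} \Sigma^+$, form the standard product with an NFA for $L(B)$ to get an NFA of size $O(mn)$, and again check emptiness.

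For part~(ii), the same reductions will go through once I supply a polynomial-size NFA for $L(A) \stackrel{\rm a-sdi}{\leftarrow} L(B)$. My intended construction parallels the one underlying Proposition~\ref{patta31} but is simpler because the overlap has fixed length one on each side. The state set would consist of two copies of $Q_A$ (for the phases before and after the inserted material) together with pairs $(q,r) \in Q_A \times Q_B$ that track a ``paused'' state $q$ of $A$ reached just after the single overlapping first letter $a$ has been consumed, along with the current state $r$ of $B$. Entering the paired phase consumes some $a \in \Sigma$ and advances both $A$ and $B$; leaving it consumes some $b \in \Sigma$, requires $B$ to reach a final state, and advances $A$ once more; after that only $A$ runs. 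This yields at most $mn + 2m$ states, so both alphabetic SDI-freeness and alphabetic SDI-independence then reduce to emptiness tests exactly as in part~(i), the latter again using the two-state NFA for $\Sigma^+$.

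The only step that requires genuine verification, and what I expect to be the main (though mild) obstacle, is confirming that the alphabetic SDI construction accepts precisely $L(A) \stackrel{\rm a-sdi}{\leftarrow} L(B)$: one direction shows that any accepting computation factors the input as $x_1 \cdot azb \cdot x_2$ witnessing $x_1 ab x_2 \in L(A)$ and $azb \in L(B)$, while the other direction simulates a given SDI decomposition. Since the construction is a direct simplification of the one for Proposition~\ref{patta31}, I expect this verification to be straightforward.
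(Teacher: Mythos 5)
Your proposal is correct and follows essentially the same route as the paper: both freeness and independence reduce to NFA emptiness (respectively intersection-emptiness with $L(B)$, using the two-state NFA for $\Sigma^+$) via the polynomial-size NFA of Proposition~\ref{patta31}, and your $mn+2m$-state construction for the alphabetic case is exactly the one the paper records as Lemma~\ref{latta1}. No gaps.
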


For context-free languages deciding SDI-freeness and
SDI-independence is undecidable.

\begin{proposition}
	\label{patta61}
	For context-free languages $L_1$ and $L_2$ it is
	undecidable whether 
	\begin{enumerate}
		\item$L_1$ is SDI-free
	with respect to $L_2$,
\item $L_1$ is SDI-independent with respect to $L_2$.
	\end{enumerate}
\end{proposition}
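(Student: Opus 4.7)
The plan is to reduce from the Post Correspondence Problem for both parts simultaneously. Given a PCP instance $((u_1,v_1),\ldots,(u_n,v_n))$ over $\{a,b\}$ with index set $I=\{1,\ldots,n\}$, I would introduce fresh symbols $\#,\#',\$,\alpha \notin \{a,b\}\cup I$ and form the standard linear CFLs
\[
L_u = \{u_{i_1}\cdots u_{i_k}\,\alpha\, i_k\cdots i_1\mid k\geq 1,\, i_j\in I\},\quad
L_v = \{v_{i_1}\cdots v_{i_k}\,\alpha\, i_k\cdots i_1\mid k\geq 1,\, i_j\in I\},
\]
for which $L_u\cap L_v\neq\emptyset$ iff the instance has a solution. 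I would then take $L_1 = \# L_u \#' L_u \$$ and $L_2 = \# L_v \#' L_v \$$, both CFLs as concatenations of CFLs with regular markers, and prove the twin equivalence that a PCP solution exists iff $L_1 \stackrel{\rm sdi}{\leftarrow} L_2 \neq \emptyset$ iff $(L_1 \stackrel{\rm sdi}{\leftarrow} \Sigma^+)\cap L_2 \neq \emptyset$.

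For the easy direction, any common $s \in L_u \cap L_v$ produces $w = \# s \#' s \$ \in L_1\cap L_2$; taking $u=\#$, $z=\varepsilon$, $v=s\#'s\$$ exhibits $w$ as an element of $w \stackrel{\rm sdi}{\leftarrow} w$, witnessing non-emptiness in both senses. For the converse I would fix $x=\#t_1\#'t_2\$\in L_1$ and an output $x'=\#s_1\#'s_2\$\in L_2$ obtained by SDI, relying on the structural property that $\#$, $\#'$ and $\$$ each occur exactly once in every string of $L_1\cup L_2$. In the freeness setting, because the outfix $uv$ of $y\in L_2$ must start with $\#$ and end with $\$$, and those markers occur in $x$ only at its endpoints, the substring $uv$ is forced to span all of $x$; hence $x'=uzv=y$, so the SDI is essentially an insertion of $z$ into $x$. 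In the independence setting the same conclusion holds by definition, since $x'$ is simply $x$ with $z$ inserted at an interior position.

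The remainder is a case split on the insertion position $p$ relative to $|t_1|+1$ (the position of $\#'$ in $x$): if $p\leq |t_1|+1$ the $\#'$ of $x$ is shifted $|z|$ positions to the right in $x'$ and the suffix $t_2\$$ is copied verbatim, giving $s_2 = t_2$; if $p\geq |t_1|+2$, symmetrically, the prefix $\#t_1\#'$ is copied verbatim, giving $s_1=t_1$. Either way the preserved half lies simultaneously in $L_u$ (via $x\in L_1$) and $L_v$ (via $x'\in L_2$), so $L_u\cap L_v\neq\emptyset$ and PCP has a solution. The main obstacle I anticipate is a careful accounting of what symbols the inserted $z$ may contain: since $x'\in L_2$ has only one copy each of $\#,\#',\$$ and $u,v$ already account for the $\#$ and $\$$, the string $z$ must avoid all three markers, which is what makes the split on the position of $\#'$ well-defined. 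The two-copy construction $\# L_u \#' L_u \$$ is essential here: the doubling guarantees that whichever side of $\#'$ the insertion avoids is preserved intact, producing the required element of $L_u\cap L_v$, whereas a single-copy analogue $\# L_u \$$ would allow nontrivial insertions with $z\neq\varepsilon$ that do not correspond to PCP solutions.
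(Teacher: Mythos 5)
Your reduction is correct: the easy direction (a PCP solution $s$ gives $w=\#s\#'s\$\in L_1\cap L_2$ with $w\in w\stackrel{\rm sdi}{\leftarrow}w$), the marker argument forcing $uv=x$ so that the result is $x$ with a marker-free $z$ inserted, and the case split on the insertion position relative to $\#'$ (which is exactly where the two-copy construction is needed) all check out for both SDI-freeness and SDI-independence. This is essentially the same PCP-style reduction via context-free intersection emptiness that the paper uses.
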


For dealing with language equations we express the
site-directed insertion operation as a 
{\em semantic shuffle on a set of trajectories\/} (SST)
due to Domaratzki~\cite{Domaratzki2004}.
The semantic shuffle extends the (syntactic) {\em shuffle
on trajectories\/} originally defined by Mateescu
et al.~\cite{Mateescu1998}.
We use a simplified
definition of SST 
that does not allow {\em content restriction\/}~\cite{Domaratzki2004}. 

The {\em trajectory alphabet\/} is $\Gamma = \{ 0, 1, \sigma \}$
and a trajectory is a string over $\Gamma$.
The semantic shuffle of $x, y \in \Sigma^*$ on  a trajectory
$t \in \Gamma^*$, denoted by $x \pitchfork_t y$, is defined as follows.

If $x = y = \varepsilon$, then $x \pitchfork_t y =  \varepsilon$
if $t = \varepsilon$ and is undefined otherwise. If $x = ax'$,
$a \in \Sigma$, $y =
\varepsilon$ and  $t = c t'$, $c \in \Gamma$, then
$$
x \pitchfork_t \varepsilon = \begin{cases}
	a (x' \pitchfork_{t'} \varepsilon) \mbox{ if } c = 0,\\
	\emptyset, \mbox{otherwise.}
\end{cases}
$$
If $x = \varepsilon$, $y = b y'$, $b \in \Sigma$, and 
$t = c t'$, $c \in \Gamma$, then
$$
\varepsilon \pitchfork_t y =
\begin{cases}
	b (\varepsilon \pitchfork_{t'} y') \mbox{ if } c = 1,\\
	\emptyset, \mbox{otherwise.}
\end{cases}
$$
In the case where all the strings are nonempty, for
$x = ax'$, $y = by'$, $a, b \in \Sigma$, and $t = ct'$,
$c \in \Gamma$, we define
$$
x \pitchfork_t y = \begin{cases}
	a(x' \pitchfork_{t'} y) \mbox{ if } c = 0,\\
	b(x \pitchfork_{t'} y') \mbox{ if } c = 1,\\
	a(x' \pitchfork_{t'} y') \mbox{ if } a = b \mbox{ and } c = \sigma,\\
	\emptyset, \mbox{ otherwise.}
\end{cases}
$$
Intuitively, the trajectory $t$ is a sequence of instructions that
guide the shuffle of strings $x$ and $y$: 0 selects the next symbol of
$x$, 1 the next symbol of $y$ (these are as in the original definition
of syntactic shuffle \cite{Mateescu1998}) and $\sigma$ represents synchronized
insertion where the next symbols of the argument strings must coincide.

For $x, y \in \Sigma^*$ and $t \in \Gamma^*$, $x \pitchfork_t y$ either
consists
of a unique string or is undefined. For $T \subseteq \Gamma^*$,
$x \pitchfork_T y = \bigcup_{t \in T} x \pitchfork_t y$ and the operation
is extended in the natural way for languages over $\Sigma$.

Directly from the definition it follows that the SDI and
alphabetic SDI operations can be represented as semantic shuffle
on a regular set of trajectories.

\begin{proposition}
	\label{patta67}
	Let $T_{\rm sdi} = 0^* \sigma^+ 1^* \sigma^+ 0^*$ and
	$T_{\rm a-sdi} = 0^* \sigma 1^* \sigma 0^*$. Then, for any languages
$L_1$ and $L_2$,
$$
L_1 \stackrel{\rm sdi}{\leftarrow} L_2 = 
L_1 \pitchfork_{T_{\rm sdi}} L_2, \mbox{ and, }
L_1 \stackrel{\rm a-sdi}{\leftarrow} L_2 = 
L_1 \pitchfork_{T_{a-sdi}} L_2.
$$
\end{proposition}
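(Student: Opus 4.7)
The plan is to verify both equalities by directly unfolding the recursive definition of $\pitchfork_t$, since the trajectory languages $T_{\rm sdi}$ and $T_{\rm a-sdi}$ are constructed precisely to encode the SDI pattern. I would argue the two inclusions separately for $T_{\rm sdi}$, and note at the end that the alphabetic case is obtained by restricting $|u| = |v| = 1$.

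For the inclusion $L_1 \stackrel{\rm sdi}{\leftarrow} L_2 \subseteq L_1 \pitchfork_{T_{\rm sdi}} L_2$, I would take an arbitrary string $w = x_1 u z v x_2$ witnessed by $x = x_1 u v x_2 \in L_1$ and $y = u z v \in L_2$ with $u, v \neq \varepsilon$, and exhibit the trajectory
$$
t \;=\; 0^{|x_1|}\, \sigma^{|u|}\, 1^{|z|}\, \sigma^{|v|}\, 0^{|x_2|},
$$
which lies in $T_{\rm sdi}$ because $|u|, |v| \geq 1$. A short induction on $|t|$ then shows that with this choice of $t$ the $0$-blocks emit the prefix $x_1$ and the tail $x_2$ from $x$, the $1$-block emits $z$ from $y$, and the two $\sigma$-blocks succeed because the $|u|$ symbols of $x$ immediately after $x_1$ coincide with the first $|u|$ symbols of $y$ (both equal $u$), and analogously for $v$. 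Hence $x_1 u z v x_2 \in x \pitchfork_t y$.

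For the reverse inclusion, I would start from any $t \in T_{\rm sdi}$, which has a unique factorization $t = 0^i \sigma^j 1^k \sigma^\ell 0^m$ with $j, \ell \geq 1$. Inspecting the recursive clauses of the SST definition, $x \pitchfork_t y$ can only be nonempty when $x$ admits a factorization into blocks of lengths $i, j, \ell, m$ and $y$ admits a factorization into blocks of lengths $j, k, \ell$, with the $\sigma$-clause forcing the length-$j$ block of $x$ to equal letter-for-letter the first block of $y$, and likewise for the length-$\ell$ blocks. Naming the common factors $u$ and $v$, this yields $x = x_1 u v x_2$ and $y = u z v$ with $u, v \neq \varepsilon$, and the shuffle produces exactly $x_1 u z v x_2 \in x \stackrel{\rm sdi}{\leftarrow} y$. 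The alphabetic SDI identity follows by the same argument with $j = \ell = 1$ forced by $T_{\rm a-sdi}$.

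I do not anticipate a real obstacle here: the entire content is the observation that the $\sigma$-blocks of the trajectory implement precisely the outfix-matching condition in the SDI definition, while the $0$- and $1$-blocks transcribe the non-overlapping portions of $x$ and $y$. The only place that requires mild care is writing the induction on $|t|$ cleanly so that the three regimes (both arguments nonempty, only $x$ nonempty, only $y$ nonempty) of the SST definition are handled correctly; this is bookkeeping rather than substantive difficulty.
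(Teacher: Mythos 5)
Your proof is correct and matches the paper's treatment: the paper gives no separate argument, stating only that the representation follows ``directly from the definition,'' and your two inclusions (exhibiting the trajectory $0^{|x_1|}\sigma^{|u|}1^{|z|}\sigma^{|v|}0^{|x_2|}$ one way, and reading off the forced factorization from $t = 0^i\sigma^j1^k\sigma^\ell 0^m$ the other way) are exactly the intended unfolding. The alphabetic case via $j=\ell=1$ is likewise as intended.
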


Now using the strong decidability results of
Domaratzki~\cite{Domaratzki2004} 
we can effectively decide linear language equations
involving site-directed insertion where the constants are
regular languages. The representation of SDI using SST guarantees the
existence of regularity preserving left- and right-inverse
of the operation. This makes it possible to use the results
of Kari~\cite{Kari1994} to decide existence
of solutions to linear equations where the constants are
regular languages. The maximal solutions to the equations
are represented using {\em semantic deletion along
trajectories\/} \cite{Domaratzki2004}. For the deletion
operation we consider a trajectory
alphabet $\Delta = \{ i, d, \sigma \}$. Intuitively,
a trajectory $t \in \Delta^*$ guides the deletion
of a string $y$ from $x$ as follows:
a symbol $i$ (insertion) indicates that we output the next symbol of $x$,
a symbol $d$ (deletion) indicates that the next symbol of $y$ must
match the next symbol of $x$ and nothing is produced in the output
and a symbol $\sigma$ (synchronization) indicates that the next symbols
of $x$ and $y$ must match and this symbol
is placed in the output.
 The result of deleting $y$ from $x$ along trajectory $t$
is denoted $x \leadsto_t y$ and the operation is extended in
the natural way for sets of trajectories and for languages.

We can express the
left- and right-inverse 
(as defined in~\cite{Kari1994,Domaratzki2004}) of SDI
 using semantic deletion along trajectories, and these relations
are used to express solutions for linear language equations.
Given a binary operation $\Diamond$ on strings, let
$\Diamond^{\rm rev}$ be the operation defined by 
$x \Diamond^{\rm rev} y = y \Diamond x$
for all $x, y \in \Sigma^*$.
Using Theorems~6.4 and~6.5 of \cite{Domaratzki2004} we obtain:

\begin{theorem}
	\label{tatta62}
	Let $L, R \subseteq \Sigma^*$ be regular languages.
	Then for each of the following equations it is
	decidable whether a solution exists:
	(a) $X \stackrel{\rm sdi}{\leftarrow} L = R$,
	(b) $L \stackrel{\rm sdi}{\leftarrow} X = R$,
	(c) $X \stackrel{\rm a-sdi}{\leftarrow} L = R$,
	(d) $L \stackrel{\rm a-sdi}{\leftarrow} X = R$.

	Define $T_1 = i^* \sigma^+ d^* \sigma^+ i^*$,
	$T_1^a =  i^* \sigma d^* \sigma i^*$,
	$T_2 = d^* \sigma^+ i^* \sigma^+ d^*$, and
	$T_2^a =  d^* \sigma i^* \sigma d^*$.
		If a solution 
		 exists, a superset of all
		solutions is, respectively, for the different
cases:
		(a) $S_a = \overline{\overline{R} \leadsto_{T_1} L}$,
		(b)  $S_b = \overline{L (\leadsto_{T_2})^{\rm rev}
 \overline{R}}$,
		(c) $S_c = \overline{\overline{R} \leadsto_{T_1^a} L}$,
		(d)  $S_d = \overline{L (\leadsto_{T_2^a})^{\rm rev}
 \overline{R}}$.
\end{theorem}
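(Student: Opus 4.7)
The plan is to reduce the theorem to the general decidability framework of Kari~\cite{Kari1994}, translated to the SST setting by Domaratzki~\cite{Domaratzki2004}, after having represented the four operations as semantic shuffles on regular trajectory sets. That representation is already provided by Proposition~\ref{patta67}: $\stackrel{\rm sdi}{\leftarrow} = \pitchfork_{T_{\rm sdi}}$ and $\stackrel{\rm a-sdi}{\leftarrow} = \pitchfork_{T_{\rm a-sdi}}$ with regular $T_{\rm sdi}$ and $T_{\rm a-sdi}$. Because the trajectory sets are regular, both operations preserve regularity and, by Theorems~6.4 and~6.5 of~\cite{Domaratzki2004}, admit regularity-preserving left- and right-inverses that are themselves semantic deletions $\leadsto_{T^\ell}$ and $\leadsto_{T^r}$ along regular trajectory sets obtained from the shuffle trajectory set by a simple symbol relabelling.

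Next I would compute the four inverse trajectory sets explicitly. For the left-inverse, which must recover the left argument $X$ of $X \pitchfork_t L$, a $0$-step in the shuffle (consuming $X$) becomes $i$ (output this symbol of the reconstructed $X$), a $1$-step (consuming $L$) becomes $d$ (consume a symbol of $L$ while producing nothing), and $\sigma$ stays synchronized. Hence the relabelling $0\mapsto i$, $1\mapsto d$ sends $T_{\rm sdi}$ to $T_1 = i^*\sigma^+ d^*\sigma^+ i^*$ and $T_{\rm a-sdi}$ to $T_1^a = i^*\sigma d^*\sigma i^*$. Dually, the right-inverse uses $0\mapsto d$, $1\mapsto i$ and yields $T_2 = d^*\sigma^+ i^*\sigma^+ d^*$ and $T_2^a = d^*\sigma i^*\sigma d^*$. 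Checking that these relabellings really produce the inverses in the sense of~\cite{Kari1994} is a direct inspection of the $\pitchfork_t$ and $\leadsto_t$ semantics; this is the only step that depends on the specific form of the SDI trajectory sets, and it is the main (though purely verificational) obstacle.

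Finally, Kari's characterization, in the SST formulation, asserts that whenever an equation $X\pitchfork_T L = R$ (resp.\ $L\pitchfork_T X = R$) has a solution, the union of all its solutions equals $\overline{\overline{R}\leadsto_{T^\ell}L}$ (resp.\ $\overline{L(\leadsto_{T^r})^{\rm rev}\overline{R}}$). Substituting the trajectory sets from the previous step yields precisely the four candidate supersets $S_a,S_b,S_c,S_d$ in the statement. Since $L$, $R$ and each of $T_1,T_2,T_1^a,T_2^a$ are regular, every $S_\bullet$ is effectively regular, so using the NFA construction of Proposition~\ref{patta31} (and its alphabetic analogue) we can build an NFA for $S_\bullet\stackrel{\rm sdi}{\leftarrow} L$ or $L\stackrel{\rm sdi}{\leftarrow} S_\bullet$ and test equality with $R$, which is decidable for regular languages. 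A solution exists exactly when this test succeeds, and in that case the candidate $S_\bullet$ is itself the maximal solution, completing the argument.
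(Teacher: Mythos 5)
Your proposal is correct and follows essentially the same route as the paper, which obtains the theorem by combining the SST representation of Proposition~\ref{patta67} with Theorems~6.4 and~6.5 of Domaratzki (building on Kari's invertible-operations framework). Your explicit relabellings $0\mapsto i$, $1\mapsto d$ (left-inverse) and $0\mapsto d$, $1\mapsto i$ (right-inverse), and the final test that the candidate maximal solution is itself a solution, are exactly the details the paper leaves implicit in its citation.
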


The above result does not give a polynomial time decision algorithm,
even in the case where the languages $L$ and $R$ are given by
DFA's. Semantic shuffle on  and deletion
along regular sets of trajectories preserve regularity but
the operations are inherently nondeterministic and complementation
blows up the size of an NFA. Note that deleting an individual string $y$
from a string $x$ along trajectory $t$ is deterministic, but the
automaton construction for the result of the operation
on two DFA languages is nondeterministic.
An explicit construction of an NFA
for the  syntactic shuffle of two regular languages is given
in \cite{Mateescu1998}.

The known trajectory based methods for two variable
equations~\cite{Domaratzki2004} 
do not allow the trajectories to use the synchronizing symbol $\sigma$
that is needed to represent the overlap of SDI. However, if we are just interested
to know whether a solution exists (as opposed to finding
maximal solutions), it is easy to verify that
an equation $X_1  \stackrel{\rm sdi}{\leftarrow} X_2 = R$ 
has a solution if and only if all strings of $R$ have length at least two.

\section{Maximal and minimal site-directed insertion}
\label{nelja}

Holzer et al. \cite{Holzer2017} define two deterministic variants
of the chop operation. The max-chop  (respectively, min-chop)
of strings $x$ and $y$ chooses
the non-empty suffix of  $x$ overlapping with $y$ to be as
long (respectively, as short) as possible. 

By a \emph{maximal site-directed insertion\/} of string $y$ into
a string $x$ we mean, roughly speaking, an insertion where neither
the overlapping prefix nor the overlapping suffix can be properly
extended. The operation is not deterministic because $y$ could be
inserted in different positions in $x$. At a specific position in $x$,
a string $y$ can be maximally (respectively, minimally) inserted
in at most one way.

Formally, the {\em maximal site-directed insertion\/} (max-SDI)
of a  string $y$ into string
$x$  is defined as follows:
\begin{eqnarray*}
	x  \stackrel{\rm max-sdi}{\leftarrow}  y  & = \{ & x_1 u z v x_2 
		 \mid  x = x_1 u v x_2,
\; y = uzv, \; u \neq \epsilon, v \neq \epsilon, 
\mbox{ and }  \\
& & \mbox{there exist no suffix } 
\mbox{$x_1'$ of $x_1$  and prefix $x_2'$ of $x_2$ }
\mbox{such that } \\
& & 
\mbox{ 
$x_1' x_2' \neq \epsilon$ and $y = x_1' u z' v x_2'$, $z' \in \Sigma^*$ } \; \}
\end{eqnarray*}
Equivalently the maximal SDI of $x$ and $y$ is
\begin{eqnarray*}
	x  \stackrel{\rm max-sdi}{\leftarrow} y  & =   \{ & x_1 u z v x_2 
	\mid x = x_1 u v x_2,
\; y = uzv, \; u \neq \epsilon \neq  v , 
\mbox{ no   suffix of $x_1 u$} 
\\
& & \mbox{of length greater than $|u|$ is a prefix of $uz$ and no prefix}\\
& & \mbox{
of $vx_2$ of length greater than $|v|$ is a suffix of $zv$
 } \; \}.
\end{eqnarray*}

\begin{figure}
\epsfxsize=12.5cm
\epsfbox{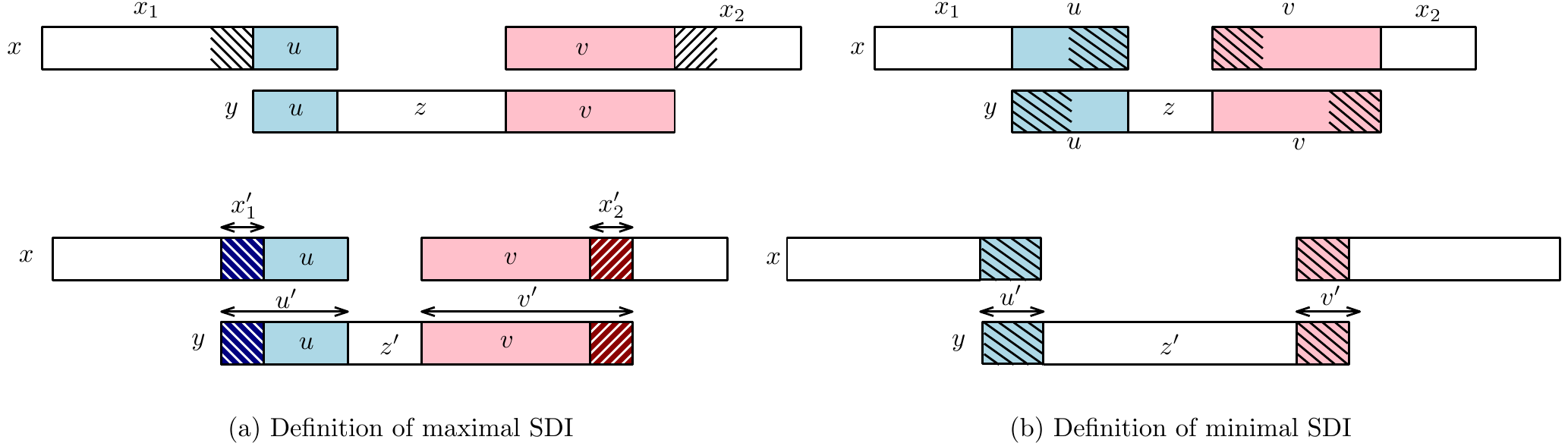}
\caption{
\label{fatta1}
Insertion of $y$ into $x$ depicted at top left is not maximal when $x$ and
$y$ have decompositions as depicted at bottom left.
}
\end{figure}

In particular, if $x$ and $y$ are unary strings with $|x| \geq |y| \geq 2$, 
then
$
x \stackrel{\rm max-sdi}{\leftarrow} y = x
$
because the maximal overlapping outfix consists always of the
entire string $y$. If $|x| \geq 2$ and $|y| > |x|$, then
$x \stackrel{\rm max-sdi}{\leftarrow} y = y$. If $|x| < 2$
or $|y| < 2$, the operation
is undefined.

\begin{example}
	{\rm  Consider alphabet $\Sigma = \{ a, b, c \}$. Now
		$$
		ababab \stackrel{\rm max-sdi}{\leftarrow} acbab = \{ 
		acbabab, abacbab, ababacbab \}
		$$
	For example also the string
	$abacbabab$  is obtained by site-directed inserting
	$y = acbab$ into $x = ababab$. In this operation the prefix $a$
of $y$ is matched with the 3rd symbol of $x$ and the suffix $b$ of
$y$ is matched with the 4th symbol of $x$. However, this operation
does not satisfy the maximality condition because after the 3rd symbol
of $x$ we can match a longer suffix of $y$.
} 
\end{example}

The {\em minimal site-directed insertion\/}
(min-SDI) operation is defined as follows:
\begin{eqnarray*}
	x  \stackrel{\rm min-sdi}{\leftarrow} y  & = \{ & x_1 u z v x_2 
	\mid x = x_1 u v x_2,
\; y = uzv, \; u \neq \epsilon, v \neq \epsilon, \\
& & 
\mbox{no proper nonempty
suffix of $u$ is a prefix of $u$, and } \\
& & \mbox{no proper nonempty prefix
of $v$ is a suffix of $v$ } \}.
\end{eqnarray*}
Note that in the definition of min-SDI, $u$ and $v$ are
{\em unbordered words.}
Figure~\ref{fatta1}~(b) illustrates the defininition of minimal SDI.
The alphabetic SDI can be viewed as an ``extreme'' case of minimal
SDI: if the first and last letter of $y$ coincide with a substring
of $x$ of length two, then the alphabetic and minimal site-directed insertion
of $y$ in that position coincide.

If $x$ and $y$ are unary strings with $|x|, |y| \geq 2$, then
$	x \stackrel{\rm min-sdi}{\leftarrow} y$ is the
unary string of length $|y| + |x| - 2$ and the operation
is undefined for $|x| < 2$ or $|y| < 2$.

Note that while the maximal or minimal SDI is
considerably more restricted than the unrestricted SDI operation,
if a string $y$ can be site-directed inserted to a string $x$,
it can be also maximally or minimally inserted at the
same position. The result
of an alphabetic insertion is always a minimal insertion.
These observations are formalized in the next lemma.

\begin{lemma}
	\label{latta21}
	Let $x, y \in \Sigma^*$.
	\begin{enumerate}
		\item 
			$ x \stackrel{\rm max-sdi}{\leftarrow} y
			\subseteq 
			x \stackrel{\rm sdi}{\leftarrow} y$ and
			$ x \stackrel{\rm a-sdi}{\leftarrow} y
	\subseteq x \stackrel{\rm min-sdi}{\leftarrow} y
			\subseteq 
			x \stackrel{\rm sdi}{\leftarrow} y$.
		\item
			$x \stackrel{\rm sdi}{\leftarrow} y \neq \emptyset$
			iff 
	$x \stackrel{\rm max-sdi}{\leftarrow} y \neq \emptyset$
	iff
		$x \stackrel{\rm min-sdi}{\leftarrow} y \neq \emptyset$.
	\item It is possible that
$x \stackrel{\rm min-sdi}{\leftarrow} y \neq \emptyset$ and
$x \stackrel{\rm a-sdi}{\leftarrow} y = \emptyset$.
	\end{enumerate}
\end{lemma}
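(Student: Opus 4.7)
My plan is to handle the three parts separately, with part (ii) being the only nontrivial one. For (i), each of max-SDI, min-SDI and a-SDI is defined by taking the SDI decomposition $x = x_1 u v x_2$, $y = u z v$, $u,v \neq \varepsilon$ and adding further restrictions, so membership in any of these three sets immediately implies membership in $x \stackrel{\rm sdi}{\leftarrow} y$. For the inclusion $x \stackrel{\rm a-sdi}{\leftarrow} y \subseteq x \stackrel{\rm min-sdi}{\leftarrow} y$, the only point to notice is that a one-letter string is vacuously unbordered, since it has no nonempty proper prefix or suffix at all.

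For (ii), the trivial direction follows from (i); for the converse, I would start from an arbitrary SDI decomposition and rewrite it. To reach a max-SDI I would iteratively absorb a nonempty suffix $x_1'$ of $x_1$ into $u$, or a nonempty prefix $x_2'$ of $x_2$ into $v$, whenever the resulting outfix is still consistent with some decomposition $y = x_1' u z' v x_2'$; this must terminate because $|u| + |v| \leq |y|$ throughout. To reach a min-SDI I would iteratively shrink $u$: whenever $u$ has a nonempty proper border $w$, write $u = w s = s' w$ with $|w| < |u|$, push $s'$ into $x_1$, and re-split $y$ as $w \cdot (s z) \cdot v$; the analogous move on the right handles $v$. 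The one step that needs a brief check is that these border-shifts preserve $x = x_1 u v x_2$ and yield a valid decomposition of $y$, which follows immediately from the string identity $s' w = w s = u$. Termination is clear since every single letter is unbordered, so eventually both $u$ and $v$ are unbordered and still nonempty. I expect the border-shift to be the main subtlety of the whole argument.

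For (iii), I would exhibit the pair $x = y = abba$. The choice $u = ab$, $v = ba$, $z = \varepsilon$, $x_1 = x_2 = \varepsilon$ gives a valid min-SDI because $ab$ and $ba$ are both unbordered and $ab \cdot ba$ occurs as a substring of $x$. On the other hand, the first and last letters of $y$ are both $a$, so any alphabetic SDI of $y$ into $x$ would require $aa$ to appear as a substring of $x = abba$, which it does not. Hence $x \stackrel{\rm min-sdi}{\leftarrow} y \neq \emptyset$ while $x \stackrel{\rm a-sdi}{\leftarrow} y = \emptyset$.
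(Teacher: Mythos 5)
Your proof is correct, and it fills in exactly the reasoning the paper leaves implicit (the lemma is stated without proof, as a formalization of the preceding informal observations that any SDI at a given position can be turned into a maximal or minimal one). The outfix-extension argument for max-SDI, the border-shifting argument for min-SDI with termination via $|u|+|v|$ bounded and single letters being unbordered, and the counterexample $x=y=abba$ for part (iii) are all sound.
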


Since the max-chop and min-chop operations do not preserve regularity
\cite{Holzer2017},
it can be expected that the same holds for maximal and minimal
SDI. The proof of the following proposition is inspired by
Theorem~3 of \cite{Holzer2017}.

\begin{proposition}
	\label{patta1}
	The maximal and minimal site-directed insertion do not
	preserve regularity.
\end{proposition}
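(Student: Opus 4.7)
My plan is to establish Proposition~\ref{patta1} by exhibiting explicit regular languages $L_1, L_2 \subseteq \Sigma^*$ such that $L_1 \stackrel{\rm max-sdi}{\leftarrow} L_2$ (respectively $L_1 \stackrel{\rm min-sdi}{\leftarrow} L_2$) is not regular. As indicated in the statement, the strategy mirrors the proof of Theorem~3 of~\cite{Holzer2017} for max-chop/min-chop: design the argument languages so that the maximality (resp.\ minimality) condition on the outfix $(u,v)$ forces a unique alignment in which the lengths of $u$ and $v$ must match specific structural parameters of $x$, thereby encoding a counting relation in the output.

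First I would choose $L_1$ containing bracketing delimiter patterns of the form $\ldots b a^{*} b \ldots$ that confine any valid overlap $uv$ to a specific middle $a$-block of $x$. The delimiter letters around this block are chosen so that extending $u$ one symbol to the left produces a string that disagrees with any prefix of $y$ (because $y$'s prefix begins with $b$, whereas the last letter of $x_1$ is an $a$ or a different delimiter), and symmetrically on the right; this makes every candidate insertion automatically maximal. Second I would choose $L_2$ of the form $b a^{*} b$ (or a related form with additional structure) so that the outfix $(u,v) = (b a^{p_1}, a^{p_2} b)$ is forced by the match $uv \sqsubseteq x$ to satisfy an equation $p_1 + p_2 = j$, where $j$ is the length of the $a$-block in $x$, while $z$ records the remaining internal $a$'s of $y$.

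Third, having enumerated the max-SDI outputs, I would intersect $L_1 \stackrel{\rm max-sdi}{\leftarrow} L_2$ with an appropriate regular filter to isolate a set of the form $\{a^n b a^n : n \geq 1\}$ (or $\{a^n b^n : n \geq 1\}$), and apply the pumping lemma. Since regular languages are closed under intersection, non-regularity of the filtered set forces non-regularity of the max-SDI language. For the min-SDI case the construction is analogous, but I would additionally exploit the fact that the outfix must consist of unbordered words: over a binary alphabet unborderedness severely restricts the admissible $u,v$ (typically to short words such as $a$, $ab$, or $ba$), so the min condition pins down the alignment in an equally rigid way and yields the same kind of non-regular counting constraint in the output.

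The main obstacle will be ensuring that the max/min condition is \emph{tight} for the chosen languages. In several natural attempts, several maximal outfixes at several positions together produce a union of outputs that turns out to be regular, essentially because the parameters $p_1, p_2$ satisfy $p_1 + p_2 = j$ but the result $x_1 u z v x_2 = x_1 y x_2$ depends only on $y$ and the insertion position, erasing $j$. Overcoming this requires choosing the delimiter structure of $L_1$ so that the insertion position is essentially determined by $x$, and arranging $L_2$ so that $|u|$ (or $|v|$) separately, not merely $|uv|$, appears in the output. Once such a construction is fixed, the remaining verification is a routine case analysis over left/right extensions $(x_1' u, v x_2')$ that checks the maximality (respectively the unborderedness for min-SDI) clause.
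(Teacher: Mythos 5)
Your high-level strategy is the same as the paper's: pick concrete regular $L_1, L_2$, intersect the max-/min-SDI result with a regular filter, and exhibit a non-regular counting language. But your write-up stops at the level of a plan, and the one ingredient that makes the plan work is missing. You correctly diagnose the obstacle---that the output $x_1 u z v x_2 = x_1 y x_2$ depends only on $y$ and the insertion position, so the union over the admissible maximal alignments tends to collapse to something regular---but you do not resolve it; you only say that overcoming it ``requires choosing the delimiter structure of $L_1$ so that the insertion position is essentially determined by $x$,'' without exhibiting such a choice. The paper resolves this with end-markers: over $\Sigma=\{a,b,\$,\%\}$ it takes $L_1 = ba^+ba^+\$$ and $L_2 = ba^+ba^+\%\$$. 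Since $\%$ occurs in strings of $L_2$ but not of $L_1$, the overlapping suffix $v$ is forced to be exactly the final $\$$, so the entire maximality (respectively, unborderedness) condition is pushed onto the prefix $u$; intersecting with $(ba^+)^3\%\$$ (respectively, $(ba^+)^2\%\$$) then yields the non-regular sets $\{ba^mba^nba^k\%\$ \mid m\neq n \mbox{ or } k<n\}$ and $\{ba^mba^n\%\$ \mid n>m\geq 1\}$. Without some such device your sketch does not yet constitute a proof, since you never commit to specific languages or verify the characterization of the resulting language.

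Two smaller points. First, your intended filter target $\{a^nba^n\}$ is not what naturally comes out: the maximality condition is a statement that the outfix \emph{cannot be extended}, so the natural non-regular witness is defined by inequalities ($m\neq n$ or $k<n$), not by an equality. Second, for min-SDI your claim that unborderedness over a binary alphabet restricts $u,v$ to short words such as $a$, $ab$, $ba$ is false: $ba^iba^j$ is unbordered precisely when $i<j$, and $a^ib^j$ is unbordered for all $i,j\geq 1$. It is exactly this length-comparison flavour of unborderedness---not shortness---that the paper exploits to obtain the constraint $n>m$.
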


\begin{proof}
Let $\Sigma = \{ a, b, \$, \% \}$ and choose
$$
L_1 = ba^+ ba^+ \$, \;\;\; L_2 = ba^+ ba^+ \% \$
$$
We claim that
$$
(L_1  \stackrel{\rm max-sdi}{\leftarrow} L_2) \cap (ba^+)^3 \% \$ =
\{ b a^m b a^n b a^k \% \$ \mid m \neq n \mbox{ or } k < n,
\; m, n, k \geq 1 \} 
$$
We denote the right side of the equation by $L_{\rm result}$ which
is clearly nonregular.
Since the strings of $L_2$ contain the marker \% that does not occur
in strings of $L_1$,
when inserting a string $y \in L_2$ into a string of $L_1$ the
overlapping suffix of $y$ must consist exactly of the last symbol
\$. Consider $x = b a^i b a^j \$ \in L_1$ and
$y = b a^r b a^s \% \$ \in L_2$. In order for the resulting string
to have three symbols $b$, a prefix of $ba^r$ must overlap with
$b a^j$, that is, $j \leq r$. In order for the overlap to be
maximal we must have $r \neq i$ or $s < j$. These relations
guarantee that the unique string in
$x  \stackrel{\rm max-sdi}{\leftarrow} y$ is in $L_{\rm result}$.

For the converse inclusion we note that, for $m \neq n$ or $k < n$, 
$$
ba^m b a^n ba^k \% \$ \in ba^m ba^n \$ 
\stackrel{\rm max-sdi}{\leftarrow} b a^n ba^k \% \$.
$$

For non-closure under min-SDI we claim that
$$
(L_1  \stackrel{\rm min-sdi}{\leftarrow} L_2) \cap (ba^+)^2 \% \$ =
\{ b a^m b a^n  \% \$ \mid n > m \geq 1 \}
 =^{\rm def} L'_{\rm result}.
$$
Consider $x = b a^i b a^j \$ \in L_1$ and
$y = b a^r b a^s \% \$ \in L_2$. In order for the result of
site-directed insertion of $y$  into $x$
to have two $b$'s, $b a^i b a^j$ must be a prefix of 
$b a^r b a^s$, that is, $i = r$ and $j \leq s$. For the site-directed
insertion to be minimal, no proper non-empty prefix of $b a^i b a^j$
can be its suffix, that is $i < j$. These relations guarantee
that the minimal SDI of $x$ and $y$ is in $L'_{\rm result}$.

Conversely, for $n > m$,
$b a^m b a^n \% \$ \in b a^m b a^n \$
\stackrel{\rm min-sdi}{\leftarrow} ba^m ba^n \% \$ $.
\qed
\end{proof}

In fact, extending the max-chop and min-chop constructions
from Theorem~3 of \cite{Holzer2017} 
it would be possible to show that there exist
regular languages $L_1$ and $L_2$ such that
$L_1 \stackrel{\rm max-sdi}{\leftarrow} L_2$
(or $L_1 \stackrel{\rm min-sdi}{\leftarrow} L_2$) is
not context-free.
The maximal or minimal site-directed insertion of a finite
language into a regular language (and vice versa) is regular.

\begin{proposition}
	\label{patta2}
Let $R$ be a regular language and $L$ a finite language.
Then the languages
$R \stackrel{\rm max-sdi}{\leftarrow} L$,
$R \stackrel{\rm min-sdi}{\leftarrow} L$,
$L \stackrel{\rm max-sdi}{\leftarrow} R$,
and $L \stackrel{\rm min-sdi}{\leftarrow} R$
are regular.
\end{proposition}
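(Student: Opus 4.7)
The plan is to exploit the finiteness of $L$ to reduce each language to a finite union of languages built from $R$ by regularity-preserving operations, and then to observe that the max-/min-SDI side conditions amount to forbidding finitely many fixed patterns, hence are regular.

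For $R \stackrel{\rm max-sdi}{\leftarrow} L$ I would proceed as follows. Enumerate $L = \{y_1, \ldots, y_k\}$ and, for each $y_j$, enumerate the finitely many nontrivial outfix decompositions $y_j = uzv$. With the decomposition fixed, the output is $x_1 uzv x_2$ where $x_1 uv x_2 \in R$ and the two maximality clauses hold. These clauses forbid $x_1 u$ from ending with any prefix of $uz$ of length greater than $|u|$, and symmetrically for $vx_2$; since $u,v,z$ are fixed strings, each clause is a finite conjunction of ``does not end/begin with a specific string,'' defining regular languages $P_{u,z}$ (constraints on $x_1$) and $S_{v,z}$ (constraints on $x_2$). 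I would then take a DFA $A$ for $R$ and, for each pair of states $(p,q)$ with $\delta^*(p,uv)=q$, concatenate (strings in $P_{u,z}$ that reach $p$ from the initial state) $\cdot\ uzv\ \cdot$ (strings in $S_{v,z}$ that go from $q$ to a final state). A finite union over state pairs, decompositions, and elements of $L$ yields the result.

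The case $L \stackrel{\rm max-sdi}{\leftarrow} R$ is handled dually. For each $x \in L$ and each factorization $x = x_1 uv x_2$ with $u,v \neq \varepsilon$ (finitely many), the strings $u$ and $v$ are fixed and the inserted $y = uzv$ must satisfy $y \in R$; hence the admissible $z$ belong to the left/right quotient $u^{-1} R v^{-1}$. The maximality clauses now restrict $z$: $uz$ must avoid as prefixes the finitely many suffixes of $x_1 u$ of length greater than $|u|$, and $zv$ must symmetrically avoid as suffixes the finitely many prefixes of $vx_2$ of length greater than $|v|$. These restrictions cut out a regular subset $Z \subseteq u^{-1} R v^{-1}$, and the contribution to the output is the regular set $x_1 u \cdot Z \cdot v x_2$. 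A finite union over factorizations and over $x \in L$ remains regular.

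The min-SDI cases are strictly easier because the minimality condition constrains only $u$ and $v$ (both must be unbordered) and imposes no condition on $x_1, x_2$, or $z$. Thus the same constructions apply, with the set of admissible decompositions of each $y \in L$ (for $R \stackrel{\rm min-sdi}{\leftarrow} L$) or of each $x \in L$ (for $L \stackrel{\rm min-sdi}{\leftarrow} R$) restricted beforehand to those with $u, v$ unbordered, and with the regular side languages $P_{u,z}, S_{v,z}, Z$ replaced by their unconstrained analogues. The only substantive step in the whole proof is the verification sketched above that the max-SDI side conditions are regular: once one observes that the only ``dangerous'' suffixes of $x_1 u$ (respectively, prefixes of $vx_2$) must themselves be prefixes of $uz$ (respectively, suffixes of $zv$), of which there are only finitely many because $y$ is fixed, the regularity of the conditions is immediate.
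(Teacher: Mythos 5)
Your proof is correct and takes essentially the same route as the paper's: reduce to a single string of the finite language, enumerate its finitely many relevant decompositions, and observe that the maximality (resp.\ minimality) side conditions become regular constraints on the surrounding context precisely because that string is fixed. The only substantive difference is presentational --- the paper realizes this as an explicit NFA that buffers the last $|y|-1$ input symbols to enforce the maximality check, whereas you assemble the language from closure properties (state-pair splitting of a DFA for $R$, quotients, intersections, concatenations, finite unions), and you additionally work out the $L \stackrel{\rm max-sdi}{\leftarrow} R$ direction in detail via $u^{-1}Rv^{-1}$, a case the paper dismisses as ``very similar.''
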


\begin{proof}
We show that $R \stackrel{\rm max-sdi}{\leftarrow} L$ is regular.
The other cases are very similar.

Since 
$$R \stackrel{\rm max-sdi}{\leftarrow} L = 
\bigcup_{y \in L} R  \stackrel{\rm max-sdi}{\leftarrow} y
$$
and regular languages are closed under finite union, it is sufficient
to consider the case where $L$ consists of one string $y$.

Let $A$ be an NFA for $R$ and $y \in \Sigma^*$. 
We outline how an NFA $B$ can recognize
$L(A)  \stackrel{\rm max-sdi}{\leftarrow} y$.
On an input $w$, $B$ nondeterministically guesses a decomposition
$w = x_1 y_1 y_2 y_3 x_2$ where $x_1 y_1 y_3 x_2 \in L(A)$,
$y_1 y_2 y_3 = y$ and $y_1, y_3 \neq \varepsilon$. When reading the
prefix $x_1 y_1$, $B$ simulates a computation of $A$ ending
in a state $q$, then skips
the substring $y_2$, and continues simulation of $A$ from state
$q$ on the suffix $y_3 x_2$.

The above checks that the input is in
$L(A) \stackrel{\rm sdi}{\leftarrow} y$ and, additionally, $B$
needs to verify that the insertion is maximal. This is possible
because $B$ is looking for maximal insertions of the one
fixed string $y$.

(i) When processing the prefix $x_1$, the state of $B$ remembers
the last $|y| - 1$ symbols scanned. When the computation
nondeterministically guesses the substrings $y_1$, $y_2$, $y_3$,
it can then check that for no nonempty suffix $x_1'$ of $x_1$,
$x_1' y_1$ is a prefix of $y_1 y_2$. If this condition does
not hold, the corresponding transition is undefined.

(ii) Similarly, when processing the 
(nondeterministically selected) suffix $x_2$ of the input,
$B$ remembers the first $|y| - 1$ symbols and is able to check
that for no nonempty prefix $x_2'$ of $x_2$, $y_3 x_2'$
is a suffix of $y_2 y_3$.

If the checks in both (i) and (ii) are successful and at the end
the simulation of $A$ ends with a final state, this means that
the decomposition $x_1 y_1 y_2 y_3 x_2$ gives a maximal site-directed
insertion of $y$ into  a string of $L(A)$.
\qed
\end{proof}

\subsection{Decision problems for maximal/minimal SDI}
\label{decisionSection}

From  Proposition~\ref{patta1} we know that the maximal or minimal SDI
of regular languages need not be regular. However, for regular
languages $L_1$ and $L_2$ we can decide membership in
$L_1 \stackrel{\rm max-sdi}{\leftarrow} L_2$
(or $L_1 \stackrel{\rm min-sdi}{\leftarrow} L_2$) in polynomial time.

\begin{lemma}
	\label{latta3}
For DFAs $A$ and $B$ and $w \in \Sigma^*$ we can decide
in  time $O(n^6)$ whether 
$w \in L(A) \stackrel{\rm max-sdi}{\leftarrow} L(B)$,
or whether $w \in L(A) \stackrel{\rm min-sdi}{\leftarrow} L(B)$.
\end{lemma}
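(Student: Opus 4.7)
The plan is to enumerate, for the given input word $w$, every possible decomposition $w = x_1 u z v x_2$ with $u, v$ nonempty, and for each decomposition to test whether it realizes a valid max-SDI (respectively, min-SDI) with $x = x_1 u v x_2 \in L(A)$ and $y = u z v \in L(B)$. Writing $n$ for $\max(|w|, |Q_A|, |Q_B|)$, there are $O(n^4)$ such decompositions, indexed by four cut points $0 \le i_1 < i_2 \le i_3 < i_4 \le |w|$, and the aim is to make each per-decomposition check cost $O(n^2)$ so that the total fits into $O(n^6)$.

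Before the main loop I would compute three lookup tables, each in $O(n^2)$ time. First, a standard LCP table $\ell(i,j)$ for positions of $w$, filled by the recurrence $\ell(i,j) = \ell(i+1,j+1) + 1$ when $w[i] = w[j]$ and $0$ otherwise, so that any substring equality $w[a..b] = w[c..d]$ reduces to a length comparison plus a single lookup. Second, for $B$, I tabulate acceptance of every substring $w[j..k]$ by simulating $B$ from its initial state on each suffix $w[j..|w|]$ and recording the state reached after each prefix. Third, for $A$, I tabulate $p_A(i)$, the state of $A$ after reading $w[1..i]$, together with the state of $A$ reached from an arbitrary state $q$ on each suffix $w[j..|w|]$; together these give an $O(1)$ membership test for the spliced string $w[1..i_2]\,w[i_3+1..|w|]$ in $L(A)$.

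Inside the main loop over $(i_1, i_2, i_3, i_4)$, I first check in $O(1)$ whether $x_1 u v x_2 \in L(A)$ and $u z v \in L(B)$; if either test fails, the decomposition is discarded. For the max-SDI variant I then check maximality by iterating over every extension pair $(k_1, k_2)$ with $k_1 \in \{0,\dots,i_1\}$, $k_2 \in \{0,\dots,|w|-i_4\}$, $k_1+k_2 \ge 1$ and $k_1+k_2 \le i_3 - i_2$. Such a pair witnesses a strictly longer matching outfix precisely when the two LCP conditions $\ell(i_1-k_1+1,\; i_1+1) \ge k_1 + |u|$ and $\ell(i_3-k_2+1,\; i_3+1) \ge k_2 + |v|$ both hold, giving an $O(n^2)$ maximality check per decomposition. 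For the min-SDI variant I instead use the LCP table to verify, in $O(n)$ per decomposition, that $u = w[i_1+1..i_2]$ and $v = w[i_3+1..i_4]$ are both unbordered, by testing each candidate border length. Summing, the algorithm runs in $O(n^4)\cdot O(n^2) = O(n^6)$ for max-SDI and $O(n^5)$ for min-SDI, within the stated bound.

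The main obstacle I expect is formulating the maximality test so that the joint-extension case (both $x_1'$ and $x_2'$ nonempty) is captured correctly rather than being approximated by two independent one-sided tests: the two sides are coupled through the constraint $k_1+k_2 \le |z|$, which forces the enumeration to range over pairs rather than over each coordinate separately. This coupling is precisely what pushes the per-decomposition cost up to $O(n^2)$ and the total to $O(n^6)$. Once the coupling is made explicit, everything else reduces to the standard $O(1)$ substring-equality queries supplied by the preprocessing, and correctness follows directly from matching the definitions of max-SDI and min-SDI given in Section~\ref{nelja}.
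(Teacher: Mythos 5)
Your algorithm is correct and matches the approach the stated bound implies: enumerate the $O(n^4)$ four-cut-point decompositions $w = x_1 u z v x_2$, use $O(n^2)$ preprocessing (LCP table, DFA state tables) to make the membership tests $x_1 u v x_2 \in L(A)$ and $uzv \in L(B)$ constant time, and then verify the maximality (resp.\ unborderedness) condition per decomposition; the totals $O(n^6)$ and $O(n^5)$ are right.

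One remark: the ``main obstacle'' you identify is not actually an obstacle. You claim the joint-extension case forces an enumeration over \emph{pairs} $(k_1,k_2)$ and that this coupling is what pushes the per-decomposition cost to $O(n^2)$. But if $y = x_1' u z' v x_2'$ with $x_1' x_2' \neq \epsilon$, say $x_1' \neq \epsilon$, then $x_1' u$ is a prefix of $y$ of length at most $|uz|$ and $y$ ends in $v$, so already $y = x_1' u z'' v$ for some $z''$; hence a two-sided witness always yields a one-sided witness, and non-maximality is equivalent to the disjunction of the two independent one-sided conditions. This is exactly what the paper's second, ``equivalent'' formulation of $x \stackrel{\rm max-sdi}{\leftarrow} y$ (no suffix of $x_1 u$ longer than $|u|$ is a prefix of $uz$, and no prefix of $v x_2$ longer than $|v|$ is a suffix of $zv$) asserts. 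So the maximality check needs only $O(n)$ per decomposition, giving $O(n^5)$ overall. Your $O(n^2)$ pair enumeration computes the same predicate (the $k_2=0$ and $k_1=0$ rows subsume the coupled cases), so correctness is unaffected and the result still fits within the claimed $O(n^6)$; the claim of necessity is just wrong, not the algorithm.
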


As we have seen, the maximal and minimal SDI operations 
are often more difficult to handle than the unrestricted SDI.
Using Lemma~\ref{latta21}~(ii) we note that deciding
maximal (or minimal) SDI-freeness is the same as
deciding SDI-freeness and by Theorem~\ref{tatta20} we have:

\begin{corollary}
	\label{catta20}
For NFAs $A$ and $B$ we can decide in polynomial time
whether or not $L(A)$ is maximal SDI-free  (respectively,
minimal SDI-free)
with respect to $L(B)$.
\end{corollary}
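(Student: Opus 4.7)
The plan is to show that maximal SDI-freeness, minimal SDI-freeness, and unrestricted SDI-freeness are the same property, and then to invoke Theorem~\ref{tatta20} to obtain the polynomial-time bound. The bridge between these properties is exactly Lemma~\ref{latta21}~(ii), which states that for any individual strings $x, y \in \Sigma^*$,
$$
x \stackrel{\rm sdi}{\leftarrow} y \neq \emptyset
\;\Longleftrightarrow\;
x \stackrel{\rm max-sdi}{\leftarrow} y \neq \emptyset
\;\Longleftrightarrow\;
x \stackrel{\rm min-sdi}{\leftarrow} y \neq \emptyset.
$$

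First I would expand the definition of max-SDI-freeness (respectively, min-SDI-freeness). By definition, $L(A)$ is maximal SDI-free with respect to $L(B)$ iff $L(A) \stackrel{\rm max-sdi}{\leftarrow} L(B) = \emptyset$, and since this language is just the union of the sets $x \stackrel{\rm max-sdi}{\leftarrow} y$ over $x \in L(A)$ and $y \in L(B)$, emptiness is equivalent to $x \stackrel{\rm max-sdi}{\leftarrow} y = \emptyset$ for every such pair. Applying Lemma~\ref{latta21}~(ii) termwise then converts this to the statement that $x \stackrel{\rm sdi}{\leftarrow} y = \emptyset$ for every $x \in L(A)$ and $y \in L(B)$, which in turn is equivalent to $L(A) \stackrel{\rm sdi}{\leftarrow} L(B) = \emptyset$, i.e., to $L(A)$ being SDI-free with respect to $L(B)$. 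The same reasoning handles the min-SDI case verbatim, replacing ``max-sdi'' by ``min-sdi''.

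Once this equivalence is in hand, the decidability and complexity claim follows immediately: by Theorem~\ref{tatta20}~(1), SDI-freeness of $L(A)$ with respect to $L(B)$ can be decided in polynomial time in the sizes of the two NFAs, so the same bound applies to maximal and minimal SDI-freeness. There is really no obstacle to overcome here: the key content was already packaged into Lemma~\ref{latta21}~(ii), whose very purpose is to collapse the three freeness notions into one. The corollary is therefore essentially a one-line deduction from the lemma together with Theorem~\ref{tatta20}.
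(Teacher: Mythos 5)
Your proposal is correct and follows exactly the paper's argument: the paper also derives this corollary by observing that Lemma~\ref{latta21}~(ii) makes maximal (respectively, minimal) SDI-freeness coincide with ordinary SDI-freeness, and then applies Theorem~\ref{tatta20}. Your termwise expansion over pairs $(x,y)$ just makes explicit what the paper states in one line.
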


Also, deciding whether regular languages are 
max-SDI (or min-SDI) independent
can be done in polynomial time.

\begin{theorem}
\label{tatta3}
For NFAs $A$ and $B$, we can decide in polynomial time
whether or not $L(A)$ is maximal SDI-independent
(respectively, minimal SDI-independent)
with respect to $L(B)$.
\end{theorem}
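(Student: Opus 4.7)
The plan is to reduce both max-SDI-independence and min-SDI-independence to plain SDI-independence and then invoke Theorem~\ref{tatta20}. Concretely, I will prove that $L(A)$ is max-SDI-independent (respectively, min-SDI-independent) with respect to $L(B)$ if and only if $L(A)$ is SDI-independent with respect to $L(B)$, which yields the polynomial-time algorithm as a direct consequence.

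One implication is immediate from Lemma~\ref{latta21}~(i): since max-SDI and min-SDI are restrictions of SDI, SDI-independence implies both max- and min-SDI-independence. For the converse, I suppose $L(A)$ is not SDI-independent with respect to $L(B)$ and construct explicit witnesses of non-independence in the max and min settings, reusing the same $x \in L(A)$. Fix $w \in L(B)$, $x \in L(A)$, and $y \in \Sigma^+$ with a decomposition $w = x_1 u z v x_2$, $x = x_1 u v x_2$, $y = uzv$, $u, v \neq \varepsilon$.

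For the max-SDI witness I employ the ``extreme'' re-decomposition $x_1' = x_2' = \varepsilon$, $u' = x_1 u$, $z' = z$, $v' = v x_2$, and take the inserted string to be $y' = w$. A direct check gives $x = x_1' u' v' x_2'$ and $w = x_1' u' z' v' x_2'$ with $u', v' \neq \varepsilon$; the maximality condition is vacuous since both $x_1'$ and $x_2'$ are empty. Hence $w \in x \stackrel{\rm max-sdi}{\leftarrow} w$, and $L(A)$ is not max-SDI-independent with respect to $L(B)$.

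For the min-SDI witness I shrink $u$ and $v$ to their shortest nonempty proper borders (leaving them unchanged if they are already unbordered). Let $u'$ be the shortest nonempty proper border of $u$, or $u$ itself if none exists, and define $v'$ analogously. The classical border lemma ensures that the shortest nonempty proper border of a word is unbordered, so both $u'$ and $v'$ satisfy the unbordered requirement of the min-SDI definition. Setting $x_1' = x_1 \cdot u[1..|u|-|u'|]$ and $x_2' = v[|v'|+1..|v|] \cdot x_2$ absorbs the trimmed prefix of $u$ and suffix of $v$ into the outer pieces, preserving $x = x_1' u' v' x_2'$ and $w = x_1' u' z v' x_2'$, with $u' z v' \in \Sigma^+$. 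Thus $w \in x \stackrel{\rm min-sdi}{\leftarrow} (u' z v')$. The only nontrivial step is the invocation of the classical border lemma; the remainder is routine bookkeeping that the modified factorizations recover the original $x$ and $w$, after which Theorem~\ref{tatta20} delivers polynomial-time decidability.
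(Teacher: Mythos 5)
Your proposal is correct and follows essentially the same route as the paper: both reduce max/min-SDI-independence to plain SDI-independence (one inclusion via Lemma~\ref{latta21}~(i), the converse by re-decomposing the witness --- for the maximal case, exactly the paper's trick of taking the full outfix $(x_1u, vx_2)$ so that maximality is vacuous) and then invoke Theorem~\ref{tatta20}. For the minimal case the paper says only ``in the same way''; your border-shrinking argument correctly supplies that missing detail, though it could be simplified by shrinking $u$ and $v$ to their last and first letters, respectively, since single letters are trivially unbordered (equivalently, alphabetic insertions are always minimal by Lemma~\ref{latta21}~(i)).
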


\begin{proof}
Let $\Sigma$ be the underlying alphabet of $A$ and $B$.
We verify that
$L(A) \stackrel{\rm max-sdi}{\leftarrow} \Sigma^+
= L(A) \stackrel{\rm sdi}{\leftarrow} \Sigma^+$. The inclusion
from left to right holds by Lemma~\ref{latta21}~(i).
Conversely, suppose
$w \in L(A) \stackrel{\rm sdi}{\leftarrow} y_1 y_2 y_3$, 
where $w = x_1 y_1 y_2 y_3 x_2$, $y_1, y_3 \neq \varepsilon$,
$x_1 y_1 y_3 x_2 \in L(A)$.
Then $w \in L(A) \stackrel{\rm max-sdi}{\leftarrow} x_1 y_1 y_2 y_3 x_2$,
where the latter insertion uses the outfix $(x_1 y_1, y_3 x_2)$
as insertion guide. The insertion is maximal because the outfix cannot
be expanded.
In the same way we see that
$L(A) \stackrel{\rm min-sdi}{\leftarrow} \Sigma^+
= L(A) \stackrel{\rm sdi}{\leftarrow} \Sigma^+$.
Now the claim follows by Theorem~\ref{tatta20}.
\qed
\end{proof}

Since the max-SDI and min-SDI operations do not preserve regularity
there is no straightforward algorithm to decide whether a
regular language is closed under
maximal SDI or
under minimal SDI. We conjecture that
the problem is decidable.

\begin{problem}
Is there an algorithm that for a given regular language $L$ decides
whether or not
$L \stackrel{\rm max-sdi}{\leftarrow} L \subseteq L$
(respectively,
$L \stackrel{\rm min-sdi}{\leftarrow} L \subseteq L$)?
\end{problem}

Using Proposition~\ref{patta2} we 
can decide closure of a regular language under max/min-SDI
with a finite language.

\begin{corollary}
	\label{catta71}
Given a regular language $R$ and a finite language $F$ we can
decide whether or not (i) $R \stackrel{\rm max-sdi}{\leftarrow} F
\subseteq R$, (ii) 
$R \stackrel{\rm min-sdi}{\leftarrow} F
\subseteq R$. If $R$ is specified by a DFA 
and the length of the longest
string in $F$ is bounded by a constant, the algorithm works in
polynomial time.
\end{corollary}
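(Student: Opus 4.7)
The plan is to leverage Proposition~\ref{patta2}, which guarantees that both $R \stackrel{\rm max\text{-}sdi}{\leftarrow} F$ and $R \stackrel{\rm min\text{-}sdi}{\leftarrow} F$ are regular (since $F$ is finite), and then reduce the closure question to the regular language inclusion problem. Concretely, $R \stackrel{\rm max\text{-}sdi}{\leftarrow} F \subseteq R$ holds if and only if $(R \stackrel{\rm max\text{-}sdi}{\leftarrow} F) \cap \overline{R} = \emptyset$, and this emptiness test is decidable because we can effectively construct automata for both languages (the construction of Proposition~\ref{patta2} being effective). The min-SDI case is handled identically.

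For the polynomial-time claim, first I would revisit the NFA construction sketched in the proof of Proposition~\ref{patta2}. For a single string $y$, the NFA $B_y$ recognizing $L(A) \stackrel{\rm max\text{-}sdi}{\leftarrow} y$ simulates $A$ while additionally remembering a window of the last (respectively, next) $|y|-1$ scanned symbols in order to enforce maximality. This gives an NFA of size $O(n \cdot |\Sigma|^{|y|-1} \cdot |y|)$, where $n$ is the number of states of the DFA for $R$. Since the longest string in $F$ has constant length $k$, each $B_y$ has $\mathrm{poly}(n)$ states, and the disjoint union over $y \in F$ yields an NFA $B$ of size $O(|F| \cdot \mathrm{poly}(n))$, which is polynomial in the total input size.

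Finally, since $R$ is given by a DFA of size $n$, its complement $\overline{R}$ is a DFA of the same size, and the product of $B$ with this complement DFA is an NFA of size $O(n \cdot |B|)$, on which emptiness can be tested in linear time. The min-SDI version is analogous, using the appropriate variant of the construction from Proposition~\ref{patta2}.

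The main obstacle is the bookkeeping in the polynomial-time bound, specifically verifying that the windowed automaton from Proposition~\ref{patta2} really has size polynomial in $n$ once $|y|$ is treated as a constant, rather than hiding an exponential blow-up that would survive the $|y| = O(1)$ assumption. Once the state count of $B_y$ is pinned down as $n \cdot |\Sigma|^{O(k)} = \mathrm{poly}(n)$, the remaining steps (union over $F$, product with $\overline{R}$, emptiness check) are all routine polynomial-time constructions.
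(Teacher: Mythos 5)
Your proposal is correct and follows essentially the same route as the paper: invoke Proposition~\ref{patta2} for effective regularity, reduce closure to emptiness of the intersection with $\overline{R}$, and observe that the windowed NFA from that construction has $O(m \cdot |\Sigma|^{c_F})$ states, which is polynomial once the longest string in $F$ has constant length. Your additional bookkeeping (union over $y \in F$, product with the complement DFA) just makes explicit what the paper leaves implicit.
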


\begin{proof}
	By Proposition~\ref{patta2} the languages
	$R_{\rm max} = R \stackrel{\rm max-sdi}{\leftarrow} F$
	 and  $R_{\rm min} = 
	R \stackrel{\rm min-sdi}{\leftarrow} F$
 are effectively regular.

Suppose $R = L(A)$ where $A$ is a DFA with $m$ states
and underlying alphabet $\Sigma$ and the length
of the longest string in $F$ is $c_F$. The NFA $B$ constructed
in the proof of Proposition~\ref{patta2} for $R_{\rm max}$ (or
$R_{\rm min}$) has $O( m \cdot |\Sigma|^{c_F})$ states. Recall
that the NFA stores in the state a sequence of symbols
having length of the inserted string. Strictly speaking, 
the proof of Proposition~\ref{patta2} assumes that $F$ consists of
a single string, but a similar construction works for a finite language.
When $c_F$ is a constant, the size of $B$ is polynomial in $m$
and we can decide in polynomial time whether or not
$L(B) \cap \overline{L(A)} = \emptyset$.
	\qed
\end{proof}

The max-SDI and min-SDI operations do not preserve regularity and,
consequently, they cannot be represented using semantic shuffle
on trajectories. Thus, the tools developed in
Section~6 of \cite{Domaratzki2004}
to deal with language equations are not available and it remains
open whether we can solve language equations involving max-SDI or min-SDI.

\begin{problem}
Let $L$ and $R$ be regular languages. Is it decidable whether 
the equation
$X \stackrel{\rm max-sdi}{\leftarrow} L = R$
(respectively, $L \stackrel{\rm max-sdi}{\leftarrow} X = R$,
$X \stackrel{\rm min-sdi}{\leftarrow} L = R$,
$L \stackrel{\rm min-sdi}{\leftarrow} X = R$) has a solution?
\end{problem}

\section{Nondeterministic state complexity}
\label{nondetSection}

The site-directed insertion (SDI)
operation preserves regularity~\cite{Cho2017} (above stated as
Proposition~\ref{patta31})
and the construction
can be modified to show that also alphabetic SDI preserves regularity.
To conclude,  we  consider the nondeterministic state complexity 
of these operations.

\begin{lemma}
	\label{latta1}
For NFAs $M$ and $N$ having, respectively, $m$ and $n$ states,
the language $L(M) \stackrel{\rm a-sdi}{\leftarrow} L(N)$ can be
recognized by an NFA with $mn + 2m$ states.
\end{lemma}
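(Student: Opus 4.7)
The plan is to construct directly an NFA $B$ whose state set partitions into three disjoint groups: a copy $Q_M^{(1)}$ of $Q_M$ used to simulate $M$ on the prefix $x_1$, a product group $Q_M \times Q_N$ used to simulate $N$ on the middle portion $z$ while remembering where $M$ will resume, and a second copy $Q_M^{(2)}$ of $Q_M$ used to simulate $M$ on the suffix $x_2$. The initial state is $q_0^M$ in $Q_M^{(1)}$ and the final states are the images of $F_M$ in $Q_M^{(2)}$. The total state count is $m + mn + m = mn + 2m$, matching the claimed bound.

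I would then define the transitions as follows. Inside $Q_M^{(1)}$ and inside $Q_M^{(2)}$, $B$ behaves exactly like $M$. From a state $q \in Q_M^{(1)}$, on input $a \in \Sigma$, $B$ branches into the middle via a transition to $(p,s)$ for every $p \in \delta_M(q,a)$ and every $s \in \delta_N(q_0^N, a)$; this guesses both the position of the alphabetic prefix $a$ of $y$ and the first $N$-transition on it, and consumes $a$ from the input. Inside the middle, on input $c \in \Sigma$, from $(p,s)$ go to $(p,s')$ for every $s' \in \delta_N(s,c)$, so $M$ is frozen while $N$ runs on $z$. Finally, from $(p,s)$ on input $b \in \Sigma$, $B$ moves to $p' \in Q_M^{(2)}$ for every $p' \in \delta_M(p,b)$ such that $\delta_N(s,b) \cap F_N \neq \emptyset$; this simultaneously matches the alphabetic suffix $b$ of $y$ in $M$ and completes an accepting $N$-run on $azb$.

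For correctness I would argue both inclusions. Given $w = x_1 a z b x_2 \in L(M) \stackrel{\rm a-sdi}{\leftarrow} L(N)$, witnessed by accepting runs of $M$ on $x_1 a b x_2$ and of $N$ on $a z b$, one splices these runs together via the two jump transitions to obtain an accepting run of $B$ on $w$. Conversely, the partition of states forces any accepting run of $B$ to traverse $Q_M^{(1)}$, then the middle, then $Q_M^{(2)}$ in order; the letters consumed by the two jumps play the roles of $a$ and $b$, while the letters consumed within each group recover $x_1$, $z$, and $x_2$, together with accepting runs of $M$ on $x_1 a b x_2$ and of $N$ on $a z b$.

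The only bookkeeping subtlety is the case $z = \varepsilon$, which must be admissible since alphabetic SDI allows an empty middle. This is handled because a state $(p,s)$ reached by the entry jump can serve immediately as the source of an exit jump on $b$ whenever $\delta_N(s,b) \cap F_N \neq \emptyset$, with no intra-middle transition required. Apart from this case check, the verification is entirely routine, and the state count $mn + 2m$ is immediate from the construction.
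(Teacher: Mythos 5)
Your construction is correct: the three-block partition $Q_M^{(1)} \cup (Q_M \times Q_N) \cup Q_M^{(2)}$ with the two letter-consuming jump transitions (guessing $a$ and entering the product while launching $N$ from its initial state, then exiting on $b$ while simultaneously checking $\delta_N(s,b)\cap F_N \neq \emptyset$ and advancing $M$) gives exactly $mn+2m$ states and correctly enforces $x_1abx_2 \in L(M)$ and $azb \in L(N)$, including the $z=\varepsilon$ case. This is essentially the same construction that underlies the paper's bound, so nothing further is needed.
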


The upper bound is the same as the bound for the nondeterministic
state complexity of ordinary insertion \cite{Han2016}, however, the
construction used  for Lemma~\ref{latta1} is not the same.
Using 
Lemma~\ref{latta23}  (the fooling set lemma \cite{Birget1992})
we can establish a matching lower bound. 

\begin{lemma}
	\label{latta2}
	For $m, n \in \mathbb{N}$, there
 exist regular languages $L_1$ and $L_2$ over a binary alphabet 
having NFAs with $m$ and $n$ states, respectively, such that
any NFA for $ L_1 \stackrel{\rm a-sdi}{\leftarrow} L_2$ needs
at least $mn + 2m$ states.
\end{lemma}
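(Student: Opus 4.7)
The plan is to apply the fooling set lemma (\rlem{latta23}) with a carefully chosen pair of binary languages $L_1, L_2$ whose minimal NFAs have $m$ and $n$ states. The choice is guided by the structure of the construction underlying \rlem{latta1}, whose $mn+2m$ states split naturally into three disjoint blocks: $m$ states simulating $M$ before any insertion has begun, $mn$ states of the form ``(starting point of the insertion in $M$, current $N$-state)'' while the middle of $y$ is being read, and $m$ states simulating $M$ after insertion has ended. A good witness family must force each of these $mn + 2m$ configurations to be needed independently.

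My intended choice is to take $L_1$ so that its strings admit a single well-identified $ab$-overlap location, for instance a language whose accepted words have a fixed linear shape containing exactly one usable $ab$-substring, recognized by an $m$-state NFA; and to take $L_2$ consisting of strings of the form $a\,z\,b$ where $z$ ranges over a language recognized by an $n$-state NFA whose current $N$-state genuinely matters. The fooling set is then partitioned as $P = P_{\rm pre} \cup P_{\rm mid} \cup P_{\rm post}$ of sizes $m$, $mn$, and $m$, where each pair $(x_i, w_i)$ is built so that the prefix $x_i$ drives any NFA for $L_1 \stackrel{\rm a-sdi}{\leftarrow} L_2$ into exactly the intended configuration and the suffix $w_i$ completes it to an accepted string. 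Membership $x_i w_i \in L_1 \stackrel{\rm a-sdi}{\leftarrow} L_2$ is then immediate from the way each pair is built, since each pair packages a valid alphabetic SDI decomposition.

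The main obstacle is the distinguishing half of the fooling set condition: for $i \neq j$, showing that at least one of $x_i w_j$ or $x_j w_i$ admits no valid alphabetic SDI decomposition. I would handle this by cases according to which of the three blocks the two pairs lie in. Swaps across distinct blocks fail because the pre-insertion and post-insertion $M$-states become inconsistent with any legal decomposition $x = x_1 ab x_2$, $y = azb$. Swaps within $P_{\rm mid}$ fail because the mismatch in either the $M$-index or the $N$-index prevents the overlap position in the $L_1$-part from matching the number of symbols consumed along $N$ in the middle. The binary alphabet is what makes this case analysis delicate: without distinct endmarkers we cannot directly pin down the overlap site, so the critical design step is tuning $L_1$ so that for every intended $x_i w_i$ exactly one admissible $ab$-overlap site is possible, which is what makes the bookkeeping of each of the $mn + 2m$ states visible in the fooling set.
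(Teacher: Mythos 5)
Your overall strategy --- invoking the fooling set lemma (Lemma~\ref{latta23}) with a fooling set partitioned into three blocks of sizes $m$, $mn$ and $m$ mirroring the state blocks of the upper-bound construction --- is exactly the right one, and it is the approach the paper takes. However, what you have written is a plan for a proof rather than a proof: every piece of content that actually carries the lower bound is left unspecified. You never exhibit concrete languages $L_1$ and $L_2$, never verify that they are accepted by NFAs with exactly $m$ and $n$ states, never write down the $mn+2m$ pairs $(x_i,w_i)$, and never carry out the case analysis for condition (ii) of Lemma~\ref{latta23}. For a state-complexity lower bound these are not routine details; they are the entire theorem. In particular, the phrase ``each pair is built so that the prefix $x_i$ drives any NFA \dots into exactly the intended configuration'' presupposes the conclusion: the fooling set lemma makes no reference to the states of an arbitrary NFA for the result language, and the whole burden of the proof is to certify the purely combinatorial conditions (i) and (ii) on the strings themselves.

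The point where I would expect a concrete attempt to run into trouble is precisely the one you flag as ``delicate'' and then set aside: over a binary alphabet, distinguishing the $m$ pre-insertion pairs from the $m$ post-insertion pairs, and both of these from the $mn$ middle pairs, when the cross-concatenation $x_iw_j$ may admit an unintended alphabetic SDI decomposition $x = x_1 a b x_2$, $y = a z b$ at a site you did not plan for. Requiring $L_1$ to have ``exactly one usable $ab$-substring'' per word is in tension with needing $m$ pairwise-distinguishable pre-insertion configurations and $m$ post-insertion ones from an $m$-state NFA, and whether this tension can be resolved depends entirely on the specific choice of $L_1$ and $L_2$, which you do not make. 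Until the witness languages and the fooling set are written down explicitly and condition (ii) is checked against all spurious decompositions, the lower bound is not established.
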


The above lemmas establish  the precise nondeterministic state complexity
 of alphabetic SDI. 

\begin{corollary}
	\label{catta1}
The worst case nondeterministic state complexity of the alphabetic
site-directed insertion of an $n$-state NFA language into an $m$-state
NFA language is $mn + 2m$. The lower bound can be reached by
languages over a binary alphabet.
\end{corollary}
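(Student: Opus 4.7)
The plan is essentially to observe that Corollary~\ref{catta1} is an immediate consequence of the two preceding lemmas, so the ``proof'' is really just a matter of combining them and verifying that the alphabet size claim is inherited from the lower bound construction.

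First I would invoke Lemma~\ref{latta1} to obtain the upper bound: given NFAs $M$ and $N$ with $m$ and $n$ states respectively, the construction in that lemma yields an NFA with at most $mn + 2m$ states recognizing $L(M) \stackrel{\rm a-sdi}{\leftarrow} L(N)$. This holds regardless of the alphabet size, so in particular it applies when $M$ and $N$ are over a binary alphabet. Next I would invoke Lemma~\ref{latta2}, which asserts the existence, for each $m$ and $n$, of specific binary-alphabet languages $L_1$ and $L_2$ with $m$- and $n$-state NFAs, respectively, such that every NFA recognizing $L_1 \stackrel{\rm a-sdi}{\leftarrow} L_2$ requires at least $mn + 2m$ states (established via the fooling set technique of Lemma~\ref{latta23}).

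Putting the two bounds together, the worst case nondeterministic state complexity of alphabetic SDI is exactly $mn + 2m$, and the lower bound is witnessed over a binary alphabet. Since there is no remaining content beyond stitching these lemmas together, there is no real obstacle to the proof; the only point worth explicitly noting is that the upper bound of Lemma~\ref{latta1} is alphabet-independent, which is why the worst-case witnesses from Lemma~\ref{latta2} being binary immediately settles the optimality of the binary-alphabet bound as well.
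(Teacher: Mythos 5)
Your proposal matches the paper exactly: the corollary is stated as an immediate consequence of Lemma~\ref{latta1} (upper bound) and Lemma~\ref{latta2} (binary-alphabet lower bound), and the paper offers no further argument beyond combining them. Your added remark that the upper bound is alphabet-independent is a harmless and correct clarification.
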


It is less obvious what 
is the precise nondeterministic state complexity of the general
SDI.
If $A$ has $m$ states and $B$ has $n$ states,
Proposition~\ref{patta31}
gives an upper bound 
$3mn + 2m$ for the nondeterministic state complexity
of
 	$L(A) \stackrel{\rm sdi}{\leftarrow} L(B)$.
	Likely the bound cannot be improved  but we do not
	have a proof for the lower bound.

\begin{problem}
	\label{OGInsc} 
	What is the nondeterministic state complexity
	of site-directed insertion?
\end{problem}

\end{document}